\def\x{\mathcal{X}}
\def\y{\mathcal{Y}}
\def\a{\mathcal{A}}
\def\b{\mathcal{B}}
\newcommand{\bea}{\begin{eqnarray}}
\newcommand{\eea}{\end{eqnarray}}
\def\bi{\begin{itemize}}
\def\ei{\end{itemize}}
\def\bc{\begin{center}}
\def\ec{\end{center}}
\def\C{\hbox{$\mit I$\kern-.7em$\mit C$}}
\def\R{\hbox{$\mit I$\kern-.6em$\mit R$}}
\newcommand{\one}{\mbox{$1 \hspace{-1.0mm}  {\bf l}$}}
\def\tr{\mathrm{tr}}
\def\ns{\mathcal{NS}}
\def\c{\mathcal{C}}
\def\q{\mathcal{Q}}
\def\l{\mathcal{L}}
\def\P{\textbf{P}}
\def\L{\textbf{L}}
\def\supp{\textrm{supp\,}}
\def\co{\textrm{Conv}}
\def\rank{\textrm{rank\,}}
\newtheorem{theorem}{Theorem}
\newtheorem{corollary}[theorem]{Corollary}
\newtheorem{lemma}[theorem]{Lemma}
\newtheorem{proposition}[theorem]{Proposition}
\newtheorem{observation}[theorem]{Observation}
\begin{document}

\title{
Shared randomness and device-independent dimension witnessing}
\author{Julio I. de Vicente} \email{jdvicent@math.uc3m.es}
\affiliation{Departamento de Matem\'aticas, Universidad Carlos III de
Madrid, Avda. de la Universidad 30, E-28911, Legan\'es (Madrid), Spain}

\begin{abstract}
It has been shown that the conditional probability distributions obtained by performing measurements on an uncharacterized physical system can be used to infer its underlying dimension in a device-independent way both in the classical and quantum setting. We analyze several aspects of the structure of the sets of probability distributions corresponding to a certain dimension taking into account whether shared randomness is available as a resource or not. We first consider the so-called prepare-and-measure scenario. We show that quantumness and shared randomness are not comparable resources. That is, on the one hand, there exist behaviours that require a quantum system of arbitrarily large dimension in order to be observed while they can be reproduced with a classical physical system of minimal dimension together with shared randomness. On the other hand, there exist behaviours which require exponentially larger dimensions classically than quantumly even if the former is supplemented with shared randomness. We also show that in the absence of shared randomness, the sets corresponding to a sufficiently small dimension are negligible (zero-measure and nowhere dense) both classically and quantumly. This is in sharp contrast to the situation in which this resource is available and explains the exceptional robustness of dimension witnesses in the setting in which devices can be taken to be uncorrelated. We finally consider the Bell scenario in the absence of shared randomness and prove some non-convexity and negligibility properties of these sets for sufficiently small dimensions. This shows again the enormous difference induced by the availability or not of this resource.
\end{abstract}

\maketitle

\section{Introduction}

Is it possible to estimate the degrees of freedom of an uncharacterized physical system? This question has received much attention in the last years in what is known as device-independent dimension witnessing (DIDW). It turns out that it is indeed possible to make tests about the underlying dimension of a physical system without making any assumption on it nor on the internal functioning of the measurement devices used to interact with it. Dimension estimates can be constructed based only on the measurement data, i.e.\ on the observed probabilities of obtaining certain outcomes conditioned on the different possible choices of measurement. These results are not only interesting from the fundamental point of view but also play a role in quantum information processing. Besides allowing for experimental tests of the physical dimension \cite{experiments}, which might be considered as a resource, these investigations allow to constrain the correlations that are achievable when the setting limits the underlying dimension of the physical systems used in a protocol. These scenarios are know as semi-device-independent quantum information processing: no assumption is made on the working of the devices nor on the physical systems used except for its dimension. Ideas from DIDW have allowed to prove the security of certain cryptographic schemes \cite{qkd} and to provide randomness-expansion protocols in this framework \cite{racs}. Moreover, DIDW is intimately related to the field of quantum communication complexity, which studies the minimal amount of communication parties have to exchange to successfully carry out distributed computational tasks \cite{review}. Indeed, communication can be quantified by the dimensionality of the physical systems used to encode the messages.

The first proposals for DIDW considered the Bell scenario of quantum nonlocality since violating Bell inequalities by a certain amount might require quantum systems of at least a certain dimension \cite{bellwit}. Subsequently, the structure of quantum correlations under dimensionality constraints has been extensively studied \cite{qdim}. Although other settings have been considered \cite{other}, a different general and simple formalism for DIDW was presented in \cite{gallego} in the so-called prepare-and-measure scenario, which has been largely explored afterwards \cite{dallarno,didwpm}. Both the Bell and the prepare-and-measure scenarios rely on different parties holding devices that interact with the physical system. It is usually assumed that the action of these devices might be correlated by the parties having access to a common random variable. This induces convexity into the sets of observable probability distributions corresponding to a given dimension and separation theorems can be used to obtain linear functionals that enable DIDW. However, shared randomness can be viewed as a resource and in certain settings it might be more natural to assume that all devices are independent (this is the case, for example, when the devices are trusted and are not jointly conspiring to mimic higher-dimensional behaviours). Conditions for DIDW with uncorrelated devices have been presented in \cite{brunner} (prepare-and-measure scenario) and more recently in \cite{sikora1} (Bell scenario) and \cite{sikora2} (prepare-and-measure scenario).

In this paper we explore the differences for DIDW considering whether shared randomness is available as a resource or not. In order to do this, we analyze in detail the structure of the sets of probability distributions corresponding to a certain dimension taking into account both possibilities. We first consider the prepare-and-measure scenario and we show that quantumness and shared randomness are not comparable resources. That is, on the one hand, there exist behaviours that require a quantum system of arbitrarily large dimension in order to be observed while they can be reproduced with a classical physical system of minimal dimension together with shared randomness. On the other hand, using results from communication complexity it can be seen that there exist behaviours which require exponentially larger dimensions classically than quantumly even if the former is supplemented with shared randomness. We also show another clear difference depending on whether shared randomness is available or not. In the absence of it, the sets corresponding to a sufficiently small dimension are negligible both classically and quantumly: they are zero-measure and nowhere-dense subsets in the set of all possible behaviours. However, this is never the case in the other setting as these sets are never negligible independently of how small the dimension might be. This negligibility property also explains the exceptional robustness of dimension witnesses when devices are taken to be independent as observed in \cite{brunner}. In the second part of this article, we consider the Bell scenario. The availability or not of shared randomness is known to make a difference and non-convexity results for the sets of observable probability distributions of a fixed dimension are known \cite{vertesi,wolfe}. Here we extend these results and prove systematically some non-convexity properties for these sets for sufficiently small underlying dimension when the parties do not have access to shared randomness. Furthermore, contrary again to the case of correlated devices, we also show that in this case these sets have measure zero and are nowhere dense in the set of all quantum behaviours. In order to obtain all these results we use some very simple dimension estimates based on the rank of a matrix.

\section{Prepare-and-measure scenarios}\label{pms}


The prepare-and-measure scenario \cite{gallego} for witnessing dimensions in a device-independent way is the following. There are two parties, Alice (or A) and Bob (or B), which receive respectively inputs $x$ and $y$ from finite alphabets $\x$ and $\y$. Their only chance to communicate is by A sending a classical or quantum physical system to B depending on her input. The dimension of this system, to be defined precisely below, quantifies the amount of communication used. Upon receival of the message, B interacts with the system by performing a measurement depending on his input and produces an output $b$ which can take values in a finite alphabet. For simplicity, we will consider this output to be binary, i.\ e.\ $b\in\{0,1\}$. Then, we can record the conditional probabilities with which each output occurs for any given pair of inputs: $P(b|xy)$. This is the main object in a device-independent scenario and we will refer to it as behaviour and denote it by $\P$. Of course, as conditional probabilities, behaviours are characterized by $P(b|xy)\geq0$ $\forall b,x,y$ and $\sum_bP(b|xy)=1$ $\forall x,y$.

The question to be addressed in this setting is the following. Without using any knowledge on how A and B process their information, what is the minimal amount of classical or quantum communication sent from A to B that is compatible with the observation of a given behaviour? The possible classical messages $m(x)$ are given by dits, i.\ e.\ $m\in\{1,\ldots,d\}$. Thus, the amount of classical communication is measured by the dimension of the message $d$. A always has the chance to use a random strategy, i.\ e.\ she can send a message $m$ given $x$ with probability $s(m|x)$, and so does B, i.\ e.\ he can produce an output $b$ given $y$ and the reception of $m$ with probability $t(b|ym)$. In the quantum case A sends quantum states $\rho_x$. The dimension of her message is thus
\begin{equation}\label{quantumdim}
d=\dim \sum_x\supp\rho_x,
\end{equation}
where $\supp$ stands for the support of an operator. In order to produce his output, B can interact with the message through a quantum measurement conditioned on his input.

Thus, we define the set of behaviours obtained by sending classical messages of dimension at most $d$ by $\c_d$ (this and the other sets to be defined below also depend on $|\x|$ and $|\y|$, which we drop to ease the notation since these quantities should be in general clear from the context). In other words, $\P\in\c_d$ when
\begin{equation}\label{setc}
P(b|xy)=\sum_{m=1}^{d}s(m|x)t(b|my).
\end{equation}
On the other hand, $\q_d$ denotes the set of behaviours achievable by sending quantum states of dimension at most $d$. That is, $\P\in\q_d$ if there exists measurements for B, $\{\Pi_b^y\geq0\}$ with $\sum_b\Pi_b^y=\one$ $\forall y$, such that
\begin{equation}\label{setq}
P(b|xy)=\tr(\rho_x\Pi_b^y)
\end{equation}
where the $\{\rho_x\}$ are of dimension less or equal to $d$ (cf.\ Eq.\ (\ref{quantumdim})).

As already mentioned in the introduction, this does not exhaust all possibilities. Depending on the physical setting A and B may be granted with another resource to build their strategy: shared randomness. This means that A and B may pre-establish the strategy each will follow depending on the value of a random variable they both have access to. This boils down to the fact that they can prepare any convex combination of their previously allowed behaviours. Thus, we define the sets of behaviours obtained by sending classical or quantum messages of dimension at most $d$ together with shared randomness by
\begin{equation}
\c'_d=\co(\c_d),\quad\q'_d=\co(\q_d),
\end{equation}
where $\co(\cdot)$ stands for the convex hull. The sets $\c_d$ and $\q_d$ need not be convex so in general we have strict inclusions $\c_d\subset\c'_d$ and $\q_d\subset\q'_d$ \cite{dallarno}. This means that shared randomness is indeed a resource which can allow to perform some tasks using less communication. On the other hand, we clearly have as well the inclusions $\c_d\subseteq\q_d$ and $\c'_d\subseteq\q'_d$, which can also be seen to be in general strict. That is, quantum strategies are also a resource over classical strategies in order to reduce the amount of communication.

\begin{figure}[t]
\includegraphics[scale=0.5]{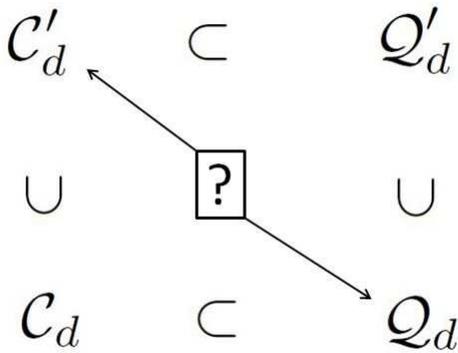}
\caption{For a fixed value of $d$, all inclusions among sets are clear except for those corresponding to classical communication together with shared randomness and quantum communication without shared randomness.}
\end{figure}

Notice that if $d\geq|\x|$, the scenario is trivial. A can unambiguously encode the value of her input into her message to B, who can then use his private randomness to output any possible behaviour. Hence, $\c_{|\x|}=\q_{|\x|}=\c'_{|\x|}=\q'_{|\x|}$, which constitute the set of all behaviours in a given setting. Thus, given any valid behaviour there always exist values of the dimension in which it is realizable in any of the aforementioned sets, since in the worst case we have $d=|\x|$. Therefore, it is always well defined to ask what the minimal value of the dimension is to obtain some behaviour in any of the four sets of possible strategies. Taking into account the inclusions pointed out above, it comes as a natural question what the relation between $\c'_d$ and $\q_d$ is (see Fig.\ 1). Moreover, given the status of both quantumness and shared randomness as a resource, it is interesting to know if one can exchange one for the other or if one is strictly more powerful than the other. Actually, this is a standard question in the context of communication complexity \cite{randvsquant}. Here, one usually seeks for differences in $d$ which are larger than a logarithmic cost over $|\x|$ as this is considered negligible with respect to the size of the input: the so-called exponential separations. We will show that there exist scenarios in which $\c'_d\nsubseteq\q_d$ and $\q_d\nsubseteq\c'_d$, with both separations being exponential (or even arbitrary). The second inclusion is a straightforward observation from known results in communication complexity. In order to establish the first one, we will first observe in the following subsection some very simple dimension estimates based on the rank of a matrix associated to $\P$. Using again these estimates, we will finish this section by showing the negligibility of low-dimensional sets in the absence of shared randomness. This explains the exceptional robustness to noise of dimension witnessing in this scenario and provides a clear contrast to the case where this resource is available.

\subsection{Dimension estimates}\label{estimates}

The fact that $\c'_{d}$ and $\q'_{d}$ are convex sets allows to separate each set from its complement by linear functionals on the behaviours. This gives rise to the so-called linear dimension witnesses \cite{gallego,dallarno}. The case of $\c_{d}$ and $\q_{d}$  was recently addressed in \cite{brunner}, which obtained some non-linear dimension witness for these non-convex sets. Specifically, they consider the scenario in which $|\x|=2|\y|=2k$ and show that the $k\times k$ matrix $W_k$ with entries $W_k(i,j)=P(0|2j-1,i)-P(0|2j,i)$ ($i,j=1,\ldots,k$) is such that $\det W_k=0$ for all behaviours in $\c_d$ ($\q_d$) with $d\leq k$ ($d\leq\sqrt{k}$). Thus, the determinant of $W_k$ being non-zero allows one to establish non-trivial lower bounds on the required dimensionality both in the classical and quantum case.

In the following we obtain more refined estimates. In order to do so and to deal with behaviours, we will arrange the array of numbers given by $\P$ into a matrix $P\in\mathbb{R}^{|\x|\times2|\y|}$ according to the rule
\begin{equation}\label{behavior}
P=\sum_{bxy}P(b|xy)|x\rangle\langle yb|,
\end{equation}
where in the standard notation of quantum mechanics $|yb\rangle=|y\rangle\otimes|b\rangle$ and $\{|y\rangle\}$ denotes the computational basis of $\mathbb{R}^{|\y|}$ and similarly for the other alphabet elements. In other words, $P$ takes the form
\begin{widetext}
\begin{equation}
P=\left(
    \begin{array}{ccccccc}
      P(0|11) & P(1|11) & P(0|12) & P(1|12) & \cdots & P(0|1|\y|) & P(1|1|\y|) \\
      P(0|21) & P(1|21) & P(0|22) & P(1|22) & \cdots & P(0|2|\y|) & P(1|2|\y|) \\
      \vdots & \vdots & \vdots & \vdots & \ddots & \vdots & \vdots \\
      P(0||\x|1) & P(1||\x|1) & P(0||\x|2) & P(1||\x|2) & \cdots & P(0||\x||\y|) & P(1||\x||\y|) \\
    \end{array}
  \right).
\end{equation}
\end{widetext}

Now, in the case $\P\in\c_d$, we can make the following immediate observation. Since the behaviour is given by Eq.\ (\ref{setc}), we have that
\begin{align}
P&=\sum_{m=1}^d\left(\sum_xs(m|x)|x\rangle\right)\left(\sum_{by}t(b|my)\langle yb|\right)\nonumber\\
&=\sum_{m=1}^du_mv_m^T
\end{align}
for some real (actually non-negative) vectors $\{u_m\}$ and $\{v_m\}$ of size $|\x|$ and $2|\y|$ respectively. Thus, we clearly see that if $\P\in\c_d$, then $\rank P\leq d$. On the other hand, if the behaviour is quantum, Eq.\ (\ref{setq}) tells us that its entries are given by the Hilbert-Schmidt inner product of pairs of Hermitian matrices of size $d\times d$. This set of matrices forms a subspace which is isomorphic to $\mathbb{R}^{d^2}$. Thus, there exists vectors $\{w_x\}$ and $\{t_{by}\}$ in this space such that $P(b|xy)=w_x^Tt_{by}$. Therefore, we now have that if $\P\in\q_d$, then $\rank P\leq d^2$.

\begin{observation}\label{obsest}
If $\P\in\c_d$, then $d\geq\rank P$ while if $\P\in\q_d$, then $d\geq\sqrt{\rank P}$.
\end{observation}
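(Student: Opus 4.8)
The plan is to represent the behaviour by the real matrix $P$ fixed in Eq.\ (\ref{behavior}) and, in each case, to bound $\rank P$ by the relevant notion of dimension. The guiding observation is that both defining equations express the entries $P(b|xy)$ as a bilinear pairing between an object attached to the row index $x$ and an object attached to the column index $(y,b)$, and that the number of independent degrees of freedom in these objects is controlled by $d$. Reading off such a factorization immediately caps the rank, and the two different exponents (linear classically, square root quantumly) reflect the different dimension counts of the respective pairing spaces.

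For the classical part I would start directly from the decomposition $P(b|xy)=\sum_{m=1}^{d}s(m|x)t(b|my)$ of Eq.\ (\ref{setc}) and separate the $x$-dependence from the $(y,b)$-dependence. Setting $u_m=\sum_x s(m|x)|x\rangle\in\mathbb{R}^{|\x|}$ and $v_m=\sum_{by}t(b|my)|yb\rangle\in\mathbb{R}^{2|\y|}$, the defining sum becomes $P=\sum_{m=1}^{d}u_m v_m^T$, a sum of $d$ rank-one matrices. Subadditivity of the rank then gives $\rank P\leq d$, which is precisely the first claim $d\geq\rank P$.

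For the quantum part the key step is to linearize the trace pairing of Eq.\ (\ref{setq}). The real vector space of Hermitian $d\times d$ matrices has dimension $d^2$, and the Hilbert-Schmidt form $(A,B)\mapsto\tr(AB)$ is a genuine real inner product on it, so this space is isometric to $\mathbb{R}^{d^2}$. Under this identification the states $\rho_x$ correspond to vectors $w_x\in\mathbb{R}^{d^2}$ and the measurement operators $\Pi_b^y$ to vectors $t_{by}\in\mathbb{R}^{d^2}$, so that $P(b|xy)=\tr(\rho_x\Pi_b^y)=w_x^T t_{by}$. This factors $P$ through the $d^2$-dimensional space, whence $\rank P\leq d^2$ and therefore $d\geq\sqrt{\rank P}$.

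The whole argument is elementary linear algebra once the matrix form of the behaviour is fixed, so I do not expect a genuine obstacle. The one point that requires care is the quantum case: one must use that $\rho_x$ and $\Pi_b^y$ are Hermitian, not merely complex, so that the trace pairing is a real bilinear form on a space of real dimension $d^2$ rather than $2d^2$. It is exactly this counting that produces the tighter exponent $\sqrt{\rank P}$ instead of a weaker bound, so checking it is the step I would be most careful about.
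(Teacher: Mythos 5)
Your proposal is correct and follows essentially the same route as the paper: the classical bound via the explicit rank-one decomposition $P=\sum_{m=1}^{d}u_m v_m^T$ read off from Eq.~(\ref{setc}), and the quantum bound via factoring the Hilbert--Schmidt pairing through the $d^2$-dimensional real space of Hermitian matrices. Your added remark on why Hermiticity gives $d^2$ rather than $2d^2$ real dimensions is a sensible point of care, though the paper takes it as given.
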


These simple observations generalize the previous aforementioned result of \cite{brunner} in several ways. First, we do not put any constraint on the size of the alphabets $\x$ and $\y$. Second, the estimates are finer since we do not rely on a matrix being of full rank or not but the bound is sensitive to the different possible values of the rank. It should be mentioned that the result of Bowles \emph{et al.} based on the $W$ matrix is also obtained by showing that the entries of this matrix are given by the inner product of a set of vectors. Thus, rank estimates are also possible in this case. In more detail, one obtains that $d\geq\rank W_k+1$ if $\P\in\c_d$ and $d\geq\sqrt{\rank W_k+1}$ if $\P\in\q_d$. Hence, it comes as a natural question whether it is better to use $P$ or $W$ to get the strongest estimate. Notice that, in the $|\x|=2|\y|=2k$ setting, the maximal possible rank of $P$ is $k+1$ (this is because several columns are surely linearly dependent due to the condition $\sum_bP(b|xy)=1$ $\forall x,y$) while for $W$ it is obviously $k$. Thus, in the case of maximal rank both approaches yield equal estimates. In the appendix we show that $\rank W\leq\rank P$. Thus, this suggests that it is generally better to use $P$. In fact, it can only be worse in cases for which $\rank W=\rank P$. However, this can only lead to a difference of one in the estimate (and not always in the quantum case since it holds for many natural numbers $n$ that $\lceil\sqrt{n+1}\rceil=\lceil\sqrt{n}\rceil$).

It is worth mentioning that stronger bounds on $d$ can be placed by using generalizations of the rank \cite{psd}. Recent literature has established an intimate relation between the non-negative rank and the classical dimension and the positive semidefinite rank and the quantum dimension in similar scenarios \cite{psd,ranks}. It is immediate to check that the nonnegative rank of $P$ and the positive semidefinite rank of $P$ are lower bounds for $d$ in the classical and quantum case respectively in the scenario considered here. Reference \cite{sikora2} also offers related strategies to bound the dimension. However, we stick here to the weaker rank estimates because they seem much easier to use. In fact, we do not know efficient algorithms to compute these other notions of the rank \cite{vavasis,psd}.

\subsection{Behaviours more expensive quantumly than classically together with shared randomness}\label{srmoreq}

In this subsection we will show that $\c'_d\nsubseteq\q_d$ with an arbitrary separation: there exist behaviours requiring a constant amount of classical communication together with shared randomness (actually just one bit) while the necessary quantum communication increases at least as $\sqrt{|\y|}$. In more detail, we will consider general scenarios such that $|\y|=k$ and $|\x|=m\geq k+1$ (this condition is only to make possible that the matrices of behaviours can have the largest possible rank, $k+1$) and we will construct behaviours $\P_k$ for any natural $k$ which they all belong to $\c'_2$ but cannot belong to $\q_{\lfloor\sqrt{k}\rfloor}$. The idea is to mix a sufficient number of behaviours in $\c_2$ such that the corresponding matrix $P_k$ has its rank as large as possible so that Observation \ref{obsest} leads us to conclude that $d\geq\sqrt{k+1}$ in order for $\P_k\in\q_d$ to hold.

An example of such a construction goes as follows. Here and throughout this paper we will denote by $e_i^{(n)}$ the vector of $\mathbb{R}^n$ that has zeroes everywhere except a 1 in the $i$th entry. Consider the behaviour $\textbf{D}_1$ in the aforementioned setting whose $m\times2k$ matrix is given by (to ease the notation we drop the dependence on $k$)
\begin{widetext}
\begin{equation}\label{ex}
D_1=\left(
      \begin{array}{cccc}
        (e_2^{(2)})^T & (e_1^{(2)})^T & \cdots & (e_1^{(2)})^T \\
        (e_1^{(2)})^T & (e_1^{(2)})^T & \cdots & (e_1^{(2)})^T \\
        \vdots & \vdots & \ddots & \vdots \\
        (e_1^{(2)})^T & (e_1^{(2)})^T & \cdots & (e_1^{(2)})^T \\
        (e_1^{(2)})^T & (e_1^{(2)})^T & \cdots & (e_1^{(2)})^T \\
        \vdots & \vdots & \vdots & \vdots \\
        (e_1^{(2)})^T & (e_1^{(2)})^T & \cdots & (e_1^{(2)})^T \\
      \end{array}
    \right)
    =\left(
        \begin{array}{c}
          0 \\
          1 \\
          \vdots \\
          1 \\
        \end{array}
      \right)\left(
               \begin{array}{cccc}
                 (e_1^{(2)})^T & \cdots & (e_1^{(2)})^T \\
               \end{array}
             \right)+e_1^{(m)}\left(
               \begin{array}{cccc}
                 (e_2^{(2)})^T & (e_1^{(2)})^T & \cdots & (e_1^{(2)})^T \\
               \end{array}
             \right).
\end{equation}
\end{widetext}
The second way to write $D_1$ shows clearly that $\textbf{D}_1\in\c_2$: B outputs all the time $b=0$ except maybe when $y=1$ depending on the bit sent by A, her action relying on whether she gets the input $x=1$ or any other. We can similarly define the behaviours $\textbf{D}_i$ ($i=1,\ldots,k$) whose matrices are all made by $1\times2$ blocks given by $(e_1^{(2)})^T$ except at the position $(i,i)$ where it is given by $(e_2^{(2)})^T$. By the same arguments as above, we have that $\textbf{D}_i\in\c_2$ $\forall i$. It is easy to see that any non-trivial mixture of all these behaviours has maximal rank. Taking for instance
\begin{equation}
\P_k=\sum_{i=1}^k\frac{1}{k}\textbf{D}_i,
\end{equation}
we find that
\begin{equation}
P_k=\left(
      \begin{array}{cccc}
        c_k^T & (e_1^{(2)})^T & \cdots & (e_1^{(2)})^T \\
        (e_1^{(2)})^T & c_k^T & \cdots & (e_1^{(2)})^T \\
        \vdots & \vdots & \ddots & \vdots \\
        (e_1^{(2)})^T & (e_1^{(2)})^T & \cdots & c_k^T \\
        (e_1^{(2)})^T & (e_1^{(2)})^T & \cdots & (e_1^{(2)})^T \\
        \vdots & \vdots & \vdots & \vdots \\
        (e_1^{(2)})^T & (e_1^{(2)})^T & \cdots & (e_1^{(2)})^T \\
      \end{array}
    \right),
\end{equation}
where
\begin{equation}
c_k^T=\left(
    \begin{array}{cc}
      1-1/k & 1/k \\
    \end{array}
  \right).
\end{equation}
Now, one can see that $P_k$ has the largest possible rank, i.\ e.\ $\rank P_k=k+1$. This is because, on the one hand, all even columns are clearly linearly independent, i.e.\ $col_{2j}(P_k)=e_j^{(m)}/k$ for $j=1,2,\ldots, k$. On the other hand, if we add to this set any other odd column, the set remains linearly independent because this column has non-zero entries where all the others have a zero entry (from the $(k+1)$th entry to the $m$th). Thus, using Observation \ref{obsest}, we finally obtain that if $\P_k\in\q_d$ it must hold that $d\geq\sqrt{k+1}$ while, by construction, $\P_k\in\c'_2$ $\forall k$. In passing, since obviously $\P_k\in\q'_2$ $\forall k$, this also shows the non-convexity of $\q_d$ (see also \cite{dallarno} and \cite{sikora2}).

\subsection{Behaviours cheaper quantumly than classically together with shared randomness}\label{qmoresr}

The results of \cite{brunner} discussed before show that there is a quadratic gap between $\c$ and $\q$. Interestingly, this gap can be seen to be exponential and extended to $\c'$. Testing classical and quantum dimensions in the prepare-and-measure scenario is intimately connected to the field of communication complexity when restricted to the scenario of one-way communication complexity. In fact, this setting is the same with the only difference that it is task-oriented. In this case, under the same restrictions A and B now have the goal of evaluating with high probability of success a binary function $f$ of their inputs that is known to both of them. That is, their strategies should aim at preparing behaviours $P(b|xy)$ for which the result $b=f(x,y)$ is much more likely than $b\neq f(x,y)$. The field of communication complexity studies what is the least amount of communication (from A to B in the one way case) necessary to evaluate different functions. The possible benefits of using quantum communication over classical communication have been extensively studied in the last years and there are several scenarios for which it is known that certain functions can be evaluated with a given probability of success requiring exponentially less communication in the quantum case than in the case of classical messages \cite{review}. Interestingly, the one-way scenario is not an exception and Refs.\ \cite{raz,montanaro} provide instances of this situation for the case of partial functions. In more detail, \cite{montanaro} considers a function, $f_P$, for which A receives an $n$-bit string $x$ (i.\ e.\ $|\x|=2^n$) and B a $n\times n$ permutation matrix $M$ (i.\ e.\ $|\y|=n!$). The goal is to output 1 if $Mx=x$ and 0 if $Mx$ and $x$ are sufficiently different (in a precise way which is irrelevant here). This is an example of a partial function or a function with a promise, A and B are guaranteed to receive a strict subset of the inputs $x$ and $y$ (those for which any of the above conditions hold). In \cite{montanaro}, it is shown that a quantum strategy solves this function using $O(\log n)$ qubits of communication (i.\ e.\ $d=O(n)$) while there cannot exist any classical strategy solving $f_P$ using less than of the order of $n^{7/16}$ bits. This immediately implies that $\q_d\nsubseteq\c'_d$. To see this, consider any behaviour corresponding to the aforementioned quantum strategies that solve $f_P$. It must then fulfill that $\P_n\in\q_d$ for some $d=O(n)$. However, it cannot be that $\P_n\in\c'_d$ as this would be in contradiction to the result of \cite{montanaro}. Indeed, any behaviour in $\c'_d$ cannot have the same entries as $\P_n$ over the subset of promised inputs $x$ and $y$ since it could then be used to solve $f_P$. Moreover, it must be that $\P_n\in\c'_{d'}$ with $d'$ scaling at least as $2^{n^{7/16}}$. Interestingly, there is roughly no difference between $\c_d$ and $\c'_d$ for the evaluation of functions. Newman's theorem \cite{newman} shows in the one-way communication scenario that classical strategies with shared randomness that solve some function can be turned into successful strategies without shared randomness with just a logarithmic overhead. Thus, if there exists some exponential gap for the solution of a function with quantum and classical resources, it must persist if we allow classical resources supplemented with shared randomness.

In the light of communication complexity, the reader might wonder whether the results of the previous section showing behaviours which required overwhelmingly more quantum communication to be prepared than classical communication together with shared randomness could be used to devise functions whose solution has a similar gap, i.\ e.\ functions that are at least exponentially cheaper to solve classically if shared randomness is allowed than quantumly. However, Newman's theorem forbids this possibility. In what comes to the evaluation of functions, the differences with and without shared randomness can be at most logarithmic in the one-way scenario even if one just uses classical messages.

\subsection{Structure of $\c_d$ and $\q_d$ and robustness of dimensionality detection}

As discussed in the introduction, the prepare-and-measure scenario was introduced to certify the dimension of uncharacterized physical systems in a device-independent way, i.e.\ based solely on the observed statistics and without any assumption on the internal working of the devices used. In this context any condition expressed in terms of the observed behaviour that guarantees that A and B exchange physical systems of at least a certain dimension is usually referred to as a dimension witness. The rank estimates introduced in Sec.\ \ref{estimates} are therefore an example of such an object. In practice, the measurement device cannot be perfectly isolated from external noise introducing errors in the experimentally reconstructed behaviour, which can make it less dimensional. The robustness of a dimension witness characterizes its noise tolerance in these scenarios and plays a crucial role in dimensionality certification. Although not strictly necessary, in this setting a natural assumption is that the preparing and measuring devices are uncorrelated (i.e.\ the preparer and the measurer are not maliciously conspiring to fool the certifier) and, hence, in this case one takes that shared randomness is not available. Thus, the problem here boils down to identifying what is the smallest $d$ such that $\P\in\c_d$ or $\P\in\q_d$. This was the motivation of \cite{brunner} to introduce the dimension witness based on the determinant of the matrix $W_k$ that we have reviewed in Sec.\ \ref{estimates}. It was observed there that these witnesses are extraordinarily robust tolerating arbitrary amounts of noise. In this section we investigate the structure of the sets $\c_d$ and $\q_d$ from this point of view and find reasons for this exceptional robustness. Low-dimensional sets are negligibly small in the set of all possible behaviours: they are nowhere dense and have measure zero. Hence, very contrived forms of noise are required to drastically reduce the dimension. This is in sharp contrast to the case where shared randomness is available since, as we also discuss here, the sets $\c'_d$ and $\q'_d$ are not negligible $\forall d\geq2$. We finish this section by observing that rank estimates are also extremely robust under any physically reasonable form of noise.

Notice that evaluating the rank of a matrix is an ill-conditioned problem. Due to the estimates presented in Sec.\ \ref{estimates}, this indicates that small perturbations of a behaviour could increase considerably the required dimension to prepare it. Furthermore, for general matrices it is well-known that lower-rank matrices are of measure zero and nowhere dense among matrices of higher rank. This suggests that lower-dimensional sets of behaviours might be negligible. We formalize this in the following.

\begin{theorem}\label{obsnegl}
In every scenario with $|\y|=k$ and $|\x|=m\geq k+1$, the sets $\c_d$ with $d<k+1$ and $\q_d$ with $d<\sqrt{k+1}$ have measure zero and are nowhere dense in the set of all possible behaviours $\c_m=\q_m$.
\end{theorem}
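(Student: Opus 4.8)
The plan is to reduce the statement about sets of behaviours to the well-known facts about low-rank matrices, exploiting Observation \ref{obsest} together with the identification of behaviours with matrices $P\in\mathbb{R}^{|\x|\times2|\y|}$ given in Eq.\ (\ref{behavior}). First I would pin down the ambient space precisely. The set of all behaviours $\c_m=\q_m$ is an affine slice of $\mathbb{R}^{|\x|\times2|\y|}$ cut out by the normalization constraints $\sum_b P(b|xy)=1$ and the positivity constraints $P(b|xy)\geq0$; since $b\in\{0,1\}$, normalization fixes the odd columns once the even columns are given, so the set of behaviours is affinely parametrized by the $|\x|\cdot|\y|$ free numbers $\{P(0|xy)\}$, each ranging over $[0,1]$. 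This is a full-dimensional convex body (a product of intervals) inside its affine hull, so it carries a natural Lebesgue measure and a natural topology, and it is with respect to these that the negligibility claims should be read.

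The heart of the argument is the contrapositive of Observation \ref{obsest}. If $\P\in\c_d$ with $d<k+1$, then $\rank P\leq d<k+1$, so $P$ fails to have the maximal rank $k+1$ attainable in this scenario; likewise $\P\in\q_d$ with $d<\sqrt{k+1}$ forces $\rank P\leq d^2<k+1$, again non-maximal rank. Hence both $\c_d$ and $\q_d$ in the stated ranges are contained in the set of behaviours whose matrix has rank strictly less than the generic (maximal) value $k+1$. So it suffices to prove the single statement that the behaviours of non-maximal rank form a measure-zero, nowhere-dense subset of all behaviours. Here I would invoke the standard fact that, in any linear or affine family of matrices in which the maximal rank $r$ is actually attained, the locus of rank $\leq r-1$ is the common zero set of all $r\times r$ minors, each of which is a nonzero polynomial in the free parameters; a finite intersection of proper algebraic hypersurfaces is closed with empty interior and Lebesgue measure zero. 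To apply this cleanly I must check two things: that $k+1$ is genuinely attainable within the behaviour set (this is exactly what the construction of $\P_k$ in Sec.\ \ref{srmoreq} furnishes, giving an explicit behaviour with $\rank P_k=k+1$), and that at least one $(k+1)\times(k+1)$ minor, viewed as a polynomial in the free entries $\{P(0|xy)\}$, does not vanish identically — which follows from the same witness, since a polynomial taking a nonzero value at one point of the domain is not the zero polynomial.

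Concretely, the steps in order are: (i) fix the affine parametrization and note that the behaviour set is the full-dimensional cube described above; (ii) observe via Observation \ref{obsest} that the sets in question lie in $\{\,\P : \rank P<k+1\,\}$; (iii) express this rank condition as the simultaneous vanishing of all $(k+1)\times(k+1)$ minors of $P$, each a polynomial in the free parameters; (iv) use the existence of a maximal-rank behaviour (e.g.\ $\P_k$) to conclude that not all these minors vanish identically, so the locus is contained in a proper real-algebraic subset; (v) conclude that a proper algebraic subset of a full-dimensional region is closed with empty interior and has Lebesgue measure zero, giving both nowhere-density and measure zero. The main subtlety to handle carefully is the interaction between the normalization constraints and the rank. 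Because normalization is a linear constraint that is already built into the parametrization, a minor of $P$ is a polynomial in the \emph{free} coordinates $\{P(0|xy)\}$ only after substituting $P(1|xy)=1-P(0|xy)$; one must verify that this substituted minor is still a genuine (not identically zero) polynomial, which is precisely guaranteed by the explicit maximal-rank witness. A second point worth a remark is that the quantum bound is slightly weaker: $\q_d$ is not known to be algebraic, so one cannot argue directly about $\q_d$, but the inclusion $\q_d\subseteq\{\rank P\leq d^2\}$ lets the algebraic argument about the rank locus do all the work, and this inclusion is all that is needed.
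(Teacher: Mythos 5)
Your proposal is correct and follows essentially the same route as the paper: reduce via Observation \ref{obsest} to the locus $S$ of rank-deficient behaviours, view $S$ as the common zero set of the $(k+1)\times(k+1)$ minors of $P$ (polynomials in the $km$ free coordinates $\{P(0|xy)\}$), and use the existence of a maximal-rank behaviour to conclude that $S$ is a proper closed algebraic subset of the full-dimensional body of behaviours, hence of measure zero and nowhere dense. The only immaterial difference is in the empty-interior step: the paper exhibits an explicit full-rank perturbation $P_\epsilon=(1-\epsilon)P+\epsilon Q$ of an arbitrary $P\in S$, whereas you use the closed-plus-measure-zero shortcut, which the paper itself notes in a footnote is sufficient (you are in fact slightly more careful than the paper in explicitly certifying, via the witness $\P_k$, that the relevant minor is not the identically zero polynomial after the substitution $P(1|xy)=1-P(0|xy)$).
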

\begin{proof}
We will show that the set of rank-deficient behaviours (i.e.\ $\rank P<k+1$), which we will denote by $S$, is of measure zero and nowhere dense in $\c_m=\q_m$. Since in the classical case we have seen that $\rank P\leq d$, when $d<k+1$ we have that $\c_d\subset S$ and the result follows. The same applies to the quantum case. The proof of the claim for rank-deficient behaviours follows basically in a straightforward manner the analogous case for general matrices. 

Let us first show that $S$ has measure zero. Notice that the set of behaviours is a subset of $\mathbb{R}^{km}$ determined by specifying an arbitrary collection of values $0\leq P(0|xy)\leq1$ $\forall x,y$ and it has non-zero Lebesgue measure. When $P\in S$, this additionally imposes that the determinants of certain square submatrices vanish, which is a polynomial in the matrix entries, i.e.\ in the $\{P(0|xy)\}$. However, the zero set of a polynomial must have measure zero (unless it is the zero polynomial). Hence, $S$ has Lebesgue measure equal to zero.

Let us now see that $S$ is nowhere dense. For this we have to see that the closure of $S$, $\overline{S}$, has empty interior. Since we are dealing with finite-dimensional matrices, for these topological considerations we can take any matrix norm $||\cdot||$. First of all, it is useful to notice that $S$ is closed. This is because $S$ is characterized by the determinant of all $(k+1)\times(k+1)$ submatrices of $P$ being zero. Hence, the set is the preimage of a closed set under a continuous map (the determinant is a polynomial of the matrix entries) and it is therefore closed. Now, since $S=\overline{S}$, we just need to check that $S$ has an empty interior. Clearly, general rank-deficient matrices can always be approximated by full-rank matrices. It remains to see the same being careful that the full-rank approximation can be chosen to be a behaviour as well. For this, take any $P\in S$ and define for any $\epsilon\in[0,1]$
\begin{equation}\label{pepsilon}
P_\epsilon=(1-\epsilon)P+\epsilon Q,
\end{equation}
where
\begin{equation}\label{pepsilonq}
Q=\sum_{j=1}^{k}e_j^{(m)}v_j^T+\left(\sum_{j=k+1}^me_j^{(m)}\right)v_{k+1}^T
\end{equation}
with the $2k$-dimensional vectors $\{v_j\}$ defined by
\begin{widetext}
\begin{equation}\label{conspms}
v_1=\left(
      \begin{array}{c}
        e_2^{(2)} \\
        e_1^{(2)} \\
        \vdots \\
        e_1^{(2)} \\
      \end{array}
    \right), v_2=\left(
      \begin{array}{c}
        e_1^{(2)} \\
        e_2^{(2)} \\
        e_1^{(2)} \\
        \vdots \\
        e_1^{(2)} \\
      \end{array}
    \right),\ldots, v_k=\left(
      \begin{array}{c}
        e_1^{(2)} \\
        \vdots \\
        e_1^{(2)} \\
        e_2^{(2)} \\
      \end{array}
    \right), v_{k+1}=\left(
      \begin{array}{c}
        e_1^{(2)} \\
        \vdots \\
        e_1^{(2)} \\
        e_1^{(2)} \\
      \end{array}
    \right).
\end{equation}
\end{widetext}
Notice that $Q$ is a valid behaviour and, therefore, so is $P_\epsilon$ $\forall\epsilon$. It is important to notice that the $\{v_j\}$ are linearly independent (LI) vectors. That the first $k$ of them are LI is clear because each of them has a nonzero entry where all the others are zero. To see that adding $v_{k+1}$ to the set keeps it LI we notice the following. This vector has its second entry equal to zero, which is the case for all of the others except $v_1$. Thus if $v_{k+1}$ could be obtained as a linear combination of the other vectors, the weight of $v_1$ has to be zero. Iterating this argument for all even entries of $v_{k+1}$ we obtain the claim. Now, because $P$ and $Q$ are behaviours and the $\{v_j\}$ are nonnegative vectors, we have that $Pv_j$ and $Qv_j$ are nonnegative vectors too $\forall j$. Moreover, by construction $Qv_j\neq0$ $\forall j$ and, therefore, $P_\epsilon v_j\neq0$ $\forall j$ and all $\epsilon>0$. Since the $\{v_j\}$ are LI this implies that $\dim\ker P_\epsilon\leq k-1$ and, hence, given that the dimension of the kernel and the rank must add up to the number of columns $2k$, $\rank P_\epsilon=k+1$ $\forall\epsilon>0$, i.e.\ $P_\epsilon\notin S$ $\forall\epsilon>0$. Thus, we finally see that $\forall\delta>0$ and $\forall P\in S$, $\exists P'\notin S$ such that $||P-P'||<\delta$ (for this it suffices to take $P'=P_\epsilon$ with $\epsilon$ sufficiently small). Hence, $S$ does not contain any nonempty open set, i.e.\ it has empty interior as we wanted to prove \footnote{Actually, the fact that S is closed and has measure zero is already enough to prove that it is nowhere dense. However, similar constructions to that of Eq.\ (\ref{pepsilonq}) will be used through the paper.}.
\end{proof}

Thus, the sets $\c_d$ with $d<k+1$ and $\q_d$ with $d<\sqrt{k+1}$ are negligibly small and $\c_m\backslash\c_d$ and $\q_m\backslash\q_d$ have full measure and a dense interior. It might be that the conditions $d<k+1$ and $d<\sqrt{k+1}$ are an artifact of the proof due to the rank estimates and that the above claim can be extended to larger values of $d<m$. It could moreover be that $\c_{d-1}$ ($\q_{d-1}$) has zero measure and is nowhere dense in $\c_d$ ($\q_d$) for all $d$ such that $2\leq d\leq m$.

The result of Theorem \ref{obsnegl} is in sharp contrast to the case when shared randomness is available. The sets $\c'_d$ and $\q'_d$ are not negligible in the set of all possible behaviours $\forall d\geq2$, as we show below. Thus, this negligibility property provides a crucial difference for DIDW in the presence or not of shared randomness.

\begin{proposition} In every scenario with $|\y|=k$ and $|\x|=m\geq k+1$, the sets $\c'_d$ and $\q'_d$ $\forall d\geq2$ have nonzero measure and are not nowhere dense in the set of all possible behaviours $\c_m=\q_m$.
\end{proposition}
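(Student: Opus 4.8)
The plan is to reduce everything to the single set $\c'_2$ and then show that it is full-dimensional. First I would note that a bit-valued message is a special case of a $d$-valued one for every $d\geq2$, so $\c_2\subseteq\c_d$ and, taking convex hulls, $\c'_2\subseteq\c'_d\subseteq\q'_d$. Hence it suffices to prove the statement for the smallest of these sets, $\c'_2$. Following the coordinates used in the proof of Theorem \ref{obsnegl}, I regard the set of all behaviours as the full-dimensional region in $\mathbb{R}^{km}$ parametrised by $\{P(0|xy)\}$ (recall $P(1|xy)=1-P(0|xy)$). Since $\c'_2$ is convex and closed (being the convex hull of the finitely many deterministic bit-strategies), it will have nonzero measure and nonempty interior, and therefore fail to be nowhere dense, as soon as I exhibit that it is full-dimensional, i.e.\ that its affine hull is all of $\mathbb{R}^{km}$.

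The core of the argument is to display $km+1$ affinely independent behaviours lying already in $\c_2$. I would start from the behaviour $\textbf{F}_0$ in which B always outputs $b=0$, so that $P(0|xy)=1$ for all $x,y$; this is realizable even with $d=1$ and hence lies in $\c_2$. For each pair $(x_0,y_0)$ I then define a behaviour $\textbf{F}_{x_0y_0}\in\c_2$ by the deterministic strategy in which A sends the message $m=2$ precisely when $x=x_0$ and $m=1$ otherwise, while B outputs $b=1$ only upon receiving $m=2$ together with input $y=y_0$ and outputs $b=0$ in every other situation. A direct check shows that this yields $P(0|xy)=1$ everywhere except $P(0|x_0y_0)=0$, and that only a binary message is needed, so $\textbf{F}_{x_0y_0}\in\c_2$.

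Finally, each difference $\textbf{F}_{x_0y_0}-\textbf{F}_0$ equals minus the standard basis vector associated with the coordinate $P(0|x_0y_0)$; these $km$ vectors are linearly independent, so the $km+1$ behaviours $\{\textbf{F}_0\}\cup\{\textbf{F}_{x_0y_0}\}$ are affinely independent and their convex hull is a full-dimensional simplex contained in $\c'_2$. This establishes that $\c'_2$, and hence every $\c'_d$ and $\q'_d$ with $d\geq2$, has nonempty interior, positive Lebesgue measure, and is not nowhere dense. I expect no serious obstacle here: the only point requiring care is choosing the family of deterministic strategies so that the single-coordinate structure, and thus the affine independence, is transparent, which is exactly what the ``flip one entry'' behaviours $\textbf{F}_{x_0y_0}$ achieve; the reduction to $\c'_2$ and the standard fact that a full-dimensional convex set has nonempty interior are then routine.
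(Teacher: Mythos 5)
Your proposal is correct and follows essentially the same route as the paper: reduce to $\c'_2$, note that the ambient set of behaviours is full-dimensional in $\mathbb{R}^{km}$, and exhibit $km+1$ affinely independent points of $\c_2$, namely the all-zeros-output behaviour together with the ``flip one entry'' behaviours (your $\textbf{F}_{x_0y_0}$ are exactly the paper's $\textbf{D}_{ij}$). Your verification of affine independence via the differences being standard basis vectors is a slightly more transparent phrasing of the same computation.
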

\begin{proof}
First of all, notice that $\c'_2\subset\c'_d$ ($d>2$) and $\c'_2\subset\q'_d$ ($d\geq2$). Hence, it suffices to prove the claim for $\c'_2$. Notice moreover that both $\c'_2$ and $\c_m$ are (convex) polytopes \cite{gallego}. We have discussed in the proof of Theorem \ref{obsnegl} that $\dim\c_m=km$. Therefore, one only needs to see that $\dim\c'_2=km$ too. For this, we have to find $km+1$ points in $\c'_2$ which are affinely independent. We give such a construction in the following. Notice that, as in Eq.\ (\ref{ex}), behaviours whose matrix has all except one rows equal are in $\c_2\subset\c'_2$. On the analogy of Eq.\ (\ref{ex}), we denote then by $\{\textbf{D}_{ij}\}$ ($1\leq i\leq m$, $1\leq j\leq k$) the behaviours whose $m\times2k$ matrices have $1\times2$ blocks equal to $(e_1^{(2)})^T$, except for the block at position $(i,j)$ which is equal to $(e_2^{(2)})^T$. We will also consider the behaviour $\textbf{D}_{0}$, whose matrix has all blocks equal to $(e_1^{(2)})^T$. Arguing in a similar manner as with the set of vectors of Eq.\ (\ref{conspms}), it is easy to see that the $km+1$ points $\{\textbf{D}_{0},\textbf{D}_{ij}\}$ (which all happen to be vertices of the polytope $\c'_2$) are LI and, hence, affinely independent.
\end{proof}

Theorem \ref{obsnegl} should not be interpreted as a physical impossibility of preparing low-dimensional behaviours. If the setting limits the amount of communication A can send to B, we are bound to observe such a low-dimensional behaviour. What we learn from it is that if the underlying dimension is sufficiently large, and B's measurements are subject to noise, it must have a very particular form in order to drastically reduce the dimension of the observed behaviour. Actually, for any behaviour $P$ one can see that under any physically reasonable form of noise $P_n$, the observed behaviour,
\begin{equation}\label{noisy}
P_\eta=\eta P+(1-\eta)P_n\quad(\eta\in[0,1]),
\end{equation}
maintains the rank $\forall\eta>0$. That is, $\rank P_\eta=\rank P$ $\forall\eta>0$ and the rank estimates are completely robust against noise. Thus, if $P$ can be certified to have dimension $d\geq k+1$ in the classical case (or $d\geq \sqrt{k+1}$ in the quantum case) by means of its rank, this will not change for its noisy version (unless in the extremal case of full noise $\eta=0$).

We finish this section by proving the above claim that $\rank P_\eta=\rank P$ $\forall\eta>0$. First we need to discuss what $P_n$ can be. The most reasonable and general form for the noise is that it is independent of A's input, i.e.\ $P_n(b|xy)=P_n(b|y)$ $\forall b,x,y$. This is because the errors only occur in the measurement process carried out by B and A has no control over it to affect the encoding of her message. Thus, $P_n={\bf 1}v^T$ where ${\bf 1}$ is a column vector in $\mathbb{R}^{|\x|}$ with all entries equal to 1 and $v=\sum_{b,y}P_n(b|y)|yb\rangle$. This implies that the noisy behaviour $P_\eta$ is given by a rank-one perturbation to $P$. This means that the rank of $P$ and $P_\eta$ can at most differ by one. However, we will see now that they are actually equal (as long as $\eta\neq0$). Since the image of $P_n$ is spanned by ${\bf 1}$, $\rank P_\eta=\rank P-1$ can only hold if there exists some vector $u$ such that $Pu\propto{\bf 1}$ in such a way that $P_\eta u=0$. Clearly, this vector must be of the form $u=\alpha{\bf 1}+w$ where $\alpha\in\mathbb{R}$ and $w\in\ker P$ because $P{\bf 1}=|\y|{\bf 1}$ for any behaviour. However, since we also have that $P_\eta{\bf 1}=|\y|{\bf 1}$, if it is possible to have a vector in the kernel of $P_\eta$ that is not in the kernel of $P$, $P_\eta u=[\alpha|\y|+(1-\eta)(v^Tw)]{\bf 1}=0$, it must be such that $v^Tw\neq0$. In this case we then have that $P_\eta w\neq0$, that is, we can also find a vector such that it is in the kernel of $P$ but not in that of $P_\eta$. This shows that $\rank P_\eta\neq\rank P-1$ when $\eta\neq0$. A similar argument shows that $\rank P\neq\rank P_\eta-1$ when $\eta\neq0$. Hence, we obtain the desired result.

\section{Bell scenarios}\label{bs}

The first scheme \cite{bellwit} that was proposed to test in a device-independent way the dimension of a quantum system used the Bell scenario of quantum non-locality \cite{reviewnl}. In this setting we also have two parties A and B, but they cannot communicate in this case. However, both A and B perform measurements dependent respectively on some inputs $x$ and $y$ on a bipartite quantum state $\rho$ they share. Each measurement leads to outputs $a$ and $b$ for A and B respectively. This scenarios can also be catalogued according to the (finite) size of the input and output alphabets $\x$, $\y$, $\a$ and $\b$. In the following we will consider that $|\x|=|\y|=m$ and $|\a|=|\b|=n$ and will refer to $(m,n)$ scenarios. Similarly to the prepare-and-measure setting, the object to which we have access to here is the set of conditional probabilities of obtaining the outputs $(a,b)$ given the choice of inputs $(x,y)$, $P(ab|xy)$. We will use again the term behaviour to refer to this collection of numbers. Obviously, it must hold that $P(ab|xy)\geq0$ $\forall a,b,x,y$ and $\sum_{a,b}P(ab|xy)=1$ $\forall x,y$. All behaviours attainable classically (together with shared randomness) satisfy
\begin{equation}
P(ab|xy)=\sum_\lambda p_\lambda P^A_\lambda(a|x)P^B_\lambda(b|y)\,\forall a,b,x,y,
\end{equation}
for some convex weights $\{p_\lambda\}$ and sets of conditional probabilities $\{P^A_\lambda\}$ and $\{P^B_\lambda\}$. Alternatively, the set of all such behaviours, the so-called local set $\l$, can be characterized to be the convex hull of all local deterministic behaviours (LDBs). The LDBs correspond to all possible deterministic uncorrelated behaviours, i.\ e.\ to those of the form $D(ab|xy)=\delta_{a,f(x)}\delta_{b,g(y)}$, where $f$ is any function mapping elements of $\x$ to $\a$ and similarly for $g$. That is, for every party a unique output occurs with probability 1 for every choice of input. Given a scenario, there is a finite number (actually $n^{2m}$) of possible LDBs and, hence, $\l$ is a polytope. We will denote by $\q$ here the set of all behaviours that can be obtained by performing measurements on bipartite quantum states $\rho_{AB}$, i.\ e.\
\begin{equation}
P(ab|xy)=\tr(\rho_{AB} E_a^x\otimes F_b^y)
\end{equation}
for some positive semidefinite operators $\{E_a^x,F_b^y\}$ such that $\sum_aE_a^x$ and $\sum_bF_b^y$ equal the identity in each party's Hilbert space $\forall x,y$. The celebrated conclusion of Bell's theorem is that $\l\subsetneq\q$.

For a fixed $(m,n)$ setting, one can now define $\q_d$ here as the set of all behaviours in $\q$ which are obtainable by a quantum state such that $\min_{X=A,B}(\dim\supp\rho_X)\leq d$, i.\ e.\ all behaviours obtainable by measuring quantum states of minimum local dimension at most $d$. As discussed in the introduction, the characterization of these sets has raised considerable attention from the point of view of both dimensionality certification and  semi-device-independent quantum information protocols. It is interesting to notice that, when the dimension of the physical system is not restricted, shared randomness is not a resource. Its availability is irrelevant in what comes to which behaviours can be observed with quantum preparations because $\q=\q_\infty$ is a convex set \cite{reviewnl}. However, this is not the case when the physical dimension of the underlying system plays a role. If shared randomness is freely available, this leads to consider the sets $\co(\q_d)$. Thus, it is interesting to explore the differences given by whether this resource is available or not and, in particular, whether $\q_d\neq\co(\q_d)$ in general. In fact, it is easy to see that $\q_1$ is non-convex in every scenario \cite{wolfe}. By definition, this set can only include uncorrelated behaviours. However, the convex hull of all LDBs gives rise to the full local polytope $\l$ and it is well-known that this set includes correlated behaviours. Reference \cite{vertesi} was the first one to observe that the sets $\q_d$ need not be convex in general. In more detail, it is shown therein that in the scenarios $(m,2)$ with $m$ even, every set $\q_d$ such that $d<\sqrt{m+1}$ is non-convex. In particular, this implies that $\q_2$ is non-convex already in the reasonably simple scenario $(4,2)$. It has been observed in \cite{wolfe} this to be the case even in the simplest possible scenario $(2,2)$ \footnote{The published version of \cite{wolfe} uses numerical evidence to provide a region in $\co(\q_2)$ which is not in $\q_2$. A posterior version [arXiv:1506.01119v4] mentions that the tools of \cite{sikora1} can be used to see that some points in this region are indeed not in $\q_2$.}. In the following we prove non-convexity properties of the sets $\q_d$ in the general scenario $(m,n)$. On the analogy of Sec.\ \ref{pms}, we will finish this section by proving that the sets $\q_d$ are negligible in $\q$ when the dimension is sufficiently small, a property which is not true for $\co(\q_d)$.

\subsection{Dimension estimates}

In order to prove the non-convexity and negligibility of $\q_d$ in $(m,n)$ scenarios we first derive dimension estimates. On the analogy of the previous section and adapting the techniques of \cite{vertesi}, we will obtain lower bounds on the quantum dimension in terms of the rank of some matrix associated to the behaviour. More explicitly, for every $(m,n)$ scenario we will arrange every behaviour $\P$ to form the $mn\times mn$ real matrix
\begin{equation}\label{Pmatrix}
P=\sum_{abxy}P(ab|xy)|xa\rangle\langle yb|,
\end{equation}
where in the standard notation of quantum mechanics $|xa\rangle=|x\rangle\otimes|a\rangle$ and $\{|x\rangle\}$ denotes the computational basis of $\mathbb{R}^{m}$ and similarly for the other alphabet elements. Thus, $P$ can be partitioned as a block matrix
$$P=\left(
      \begin{array}{ccc}
        P_{11} & \cdots & P_{1m} \\
        \vdots & \ddots & \vdots \\
        P_{m1} & \cdots & P_{mm} \\
      \end{array}
    \right)\in\mathbb{R}^{mn\times mn}$$
with blocks
$$P_{xy}=\left(
                           \begin{array}{ccc}
                             P(11|xy) & \cdots & P(1n|xy) \\
                             \vdots & \ddots & \vdots \\
                             P(n1|xy) & \cdots & P(nn|xy) \\
                           \end{array}
                         \right)\in\mathbb{R}^{n\times n}.$$
It will be relevant in the next subsection to note that the matrix associated to LDBs $D(ab|xy)=\delta_{a,f(x)}\delta_{b,g(y)}$ is of rank one, i.\ e.\
\begin{align}
D&=\left(\sum_{ax}\delta_{a,f(x)}|xa\rangle\right)\left(\sum_{by}\delta_{b,g(y)}\langle yb|\right)\nonumber\\
&=\left(\sum_{x}|xf(x)\rangle\right)\left(\sum_{y}\langle yg(y)|\right).
\end{align}

Suppose now that $\P\in\q_d$ and that the optimal quantum state is such that $d=\dim\supp\rho_A$. This means that the operators $\{E_a^x\}$ act on $\mathbb{C}^d$. Since they are Hermitian, they must then belong to a real vector space of dimension $d^2$ and, thus, at most $d^2$ of them can be linearly independent. In other words, we can express all the $\{E_a^x\}$ as real linear combinations of a fixed set of $d^2$ Hermitian operators (e.\ g.\ the identity and the generators of SU$(d)$). By linearity of the trace, this means that there are at most $d^2$ linearly independent rows in the matrix $P$ and, hence, $\rank P\leq d^2$. If it was the case that $d=\dim\supp\rho_B$, then we can make the same reasoning with the operators $\{F_b^y\}$ and the columns of $P$, arriving again at the same conclusion that $d\geq\sqrt{\rank P}$.

\begin{observation}\label{obsest2}
If $\P\in\q_d$, then $d\geq\sqrt{\rank P}$.
\end{observation}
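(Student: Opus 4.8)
The plan is to exhibit each entry of the matrix $P$ from Eq.~(\ref{Pmatrix}) as a Hilbert--Schmidt inner product between two families of Hermitian operators, one family indexed by the rows $(x,a)$ and the other by the columns $(y,b)$, and then bound the rank by the dimension of the real vector space in which one of these families lives. Since the definition of $\q_d$ only requires $\min_{X=A,B}(\dim\supp\rho_X)\leq d$, I would treat the two cases symmetrically and assume without loss of generality that $d=\dim\supp\rho_A$, bounding the number of linearly independent \emph{rows}; the case $d=\dim\supp\rho_B$ is identical after interchanging the roles of rows and columns.

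First I would rewrite $P(ab|xy)=\tr(\rho_{AB}\,E_a^x\otimes F_b^y)$ by partially tracing out Bob's system. Defining the Hermitian operators $\sigma_b^y=\tr_B[(\one\otimes F_b^y)\rho_{AB}]$ on Alice's space, one obtains $P(ab|xy)=\tr(\sigma_b^y E_a^x)$. This displays the row $(x,a)$ of $P$ as the image of $E_a^x$ under the fixed linear map $\Phi\colon H\mapsto(\tr(\sigma_b^y H))_{(y,b)}$ from the real space of Hermitian operators on Alice's Hilbert space into $\mathbb{R}^{mn}$. All rows of $P$ therefore lie in the image of $\Phi$, and it remains to control the dimension of the relevant domain.

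The key step — and the point I expect to require the most care — is precisely this dimension count, since the operators $E_a^x$ may a priori act on a space much larger than $\mathbb{C}^d$, so one cannot directly assert that they span a space of dimension at most $d^2$. The remedy is to observe that every $\sigma_b^y$ has its image contained in $\supp\rho_A$ (because $\supp\rho_{AB}\subseteq\supp\rho_A\otimes\mathcal{H}_B$), so if $\Pi_A$ denotes the projector onto $\supp\rho_A$, of dimension $d$, then $\tr(\sigma_b^y E_a^x)=\tr(\sigma_b^y\,\Pi_A E_a^x\Pi_A)$ for all indices. Hence the same matrix entries are reproduced by the compressed operators $\tilde E_a^x=\Pi_A E_a^x\Pi_A$, which are genuine Hermitian operators on the $d$-dimensional space $\supp\rho_A$ and thus live in a real vector space of dimension exactly $d^2$.

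Finally I would conclude that every row of $P$ equals $\Phi(\tilde E_a^x)$ with $\tilde E_a^x$ ranging over a space of real dimension $d^2$, so there can be at most $d^2$ linearly independent rows, giving $\rank P\leq d^2$. Taking square roots yields $d\geq\sqrt{\rank P}$, and running the symmetric argument on the columns using the operators $\{F_b^y\}$ covers the case $d=\dim\supp\rho_B$, completing the proof.
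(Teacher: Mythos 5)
Your proof is correct and follows essentially the same route as the paper's: bound the number of linearly independent rows (or columns) of $P$ by the dimension $d^2$ of the real vector space of Hermitian operators on a $d$-dimensional space, splitting into the two cases $d=\dim\supp\rho_A$ and $d=\dim\supp\rho_B$. The only difference is that you explicitly justify, via compression with the projector onto $\supp\rho_A$, why the operators $E_a^x$ may be taken to act on a $d$-dimensional space — a step the paper asserts without comment — which is a welcome refinement but not a different argument.
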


It is important to notice for the following that the largest rank a matrix of a behaviour can attain is $mn-m+1$. This is because quantum behaviours must obey the no-signaling constraints
\begin{align}
\sum_bP(ab|xy)=\sum_bP(ab|xy'),\quad \forall a,x,y,y',\nonumber\\
\sum_aP(ab|xy)=\sum_aP(ab|x'y),\quad \forall b,x,x',y.
\end{align}
The set of all behaviours fulfilling these conditions will be denoted by $\ns$, which is also a polytope.

It should be mentioned that \cite{sikora1} already provides means to obtain lower bounds for the dimension and, actually, one can also use for these matters the positive semidefinite rank of $P$. However, as in the previous section, although weaker, rank estimates turn out to be more easily applied.

\subsection{Non-convexity of $\q_d$}

In order to prove the non-convexity of $\q_d$ we will use the following strategy. We will construct a local behaviour $\L$ such that $L$ has the largest possible rank $mn-m+1$. Through Observation \ref{obsest2} this implies that $\L\notin\q_d$ if $d<\sqrt{\rank L}$. However, since all local behaviours can be written as a convex combination of LDBs, it must hold that $\L\in\co(\q_1)\subset\co(\q_d)$ $\forall d$. Thus, for sufficiently small values of $d$, $\q_d$ cannot be convex.

\begin{lemma}\label{obsbell}
In every $(m,n)$ scenario there exists $\L\in\l$ such that $\rank L=mn-m+1$.
\end{lemma}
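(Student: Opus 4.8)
The plan is to exploit the fact, already recorded just before the statement, that $mn-m+1$ is the maximal possible rank of any behaviour matrix (the $m-1$ vectors $(|y\rangle-|1\rangle)\otimes\mathbf{1}_n$ always lie in $\ker L$ by no-signaling). Since this is only an upper bound, proving the lemma amounts to exhibiting a single local behaviour whose matrix saturates it. Rather than optimising over the polytope, I would write down one explicit $\L\in\l$ and compute its rank directly. The natural candidate is the ``copying'' behaviour generated by pure shared randomness: label outputs by $\{1,\dots,n\}$, draw $\vec a=(a_1,\dots,a_m)\in\{1,\dots,n\}^m$ uniformly, and use the LDBs $D^{(\vec a)}$ defined by $f_{\vec a}(x)=a_x$ and $g_{\vec a}(y)=a_y$. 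Each $D^{(\vec a)}$ is a genuine LDB (rank one), so the uniform mixture $\L=n^{-m}\sum_{\vec a}\textbf{D}^{(\vec a)}$ is manifestly in $\l$.

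Next I would evaluate the blocks of the associated matrix $L$. A direct count gives $L(ab|xy)=\tfrac1n\delta_{ab}$ when $x=y$ and $L(ab|xy)=\tfrac1{n^2}$ when $x\neq y$, so that, writing $J_r$ for the $r\times r$ all-ones matrix,
\begin{equation}
L=\frac{1}{n^2}\,J_m\otimes J_n+\frac1n\,I_m\otimes\Big(I_n-\frac1n J_n\Big).
\end{equation}
The second tensor factor of the second term is $n$ times the orthogonal projector $\Pi$ onto $\mathbf{1}_n^{\perp}$, while the first term is supported on $\mathbf{1}_n$ in both its domain and codomain. This is the structural observation that makes the computation painless.

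I would then decompose $\mathbb{R}^{mn}=(\mathbb{R}^m\otimes\mathbf{1}_n)\oplus(\mathbb{R}^m\otimes\mathbf{1}_n^{\perp})$. Because $J_n$ annihilates $\mathbf{1}_n^{\perp}$ and $\Pi$ annihilates $\mathbf{1}_n$, the matrix $L$ is block-diagonal with respect to this orthogonal splitting. On the first sector (dimension $m$) it acts as $\tfrac1n J_m$, of rank $1$; on the second sector (dimension $m(n-1)$) it acts as $\tfrac1n$ times the identity, of rank $m(n-1)$. Since the two sectors are orthogonal the ranks simply add, giving $\rank L=1+m(n-1)=mn-m+1$, as required. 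As a consistency check, the kernel is exactly $\{w\otimes\mathbf{1}_n:\ w\perp\mathbf{1}_m\}$, the $(m-1)$-dimensional no-signaling space identified above.

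I do not expect a serious obstacle here: the only point requiring care is that the rank-$1$ and rank-$m(n-1)$ contributions not overlap, which is guaranteed automatically by the orthogonality of $\mathbf{1}_n$ and $\mathbf{1}_n^{\perp}$, and by the fact that $L$ genuinely block-diagonalises (both domain and codomain respect the splitting). If one preferred to avoid an explicit construction, an alternative route would be to note that $\l$ is full-dimensional in the affine no-signaling space and that the set of matrices of maximal rank is relatively open and dense there, so it must meet $\l$; but the explicit mixture above is cleaner and self-contained, so that is the route I would take.
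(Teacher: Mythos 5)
Your proof is correct. The overall strategy coincides with the paper's --- both exhibit an explicit convex combination of local deterministic behaviours and check that its matrix attains the maximal rank $mn-m+1$ --- but the construction and the rank computation are genuinely different. The paper mixes only $mn-m+1$ carefully chosen LDBs (each equal to the trivial ``always output $1$'' strategy except at a single input) and then bounds $\dim\ker L$ by exhibiting a family of linearly independent nonnegative vectors that $L$ does not annihilate. You instead take the uniform mixture over all $n^m$ ``copying'' strategies, which yields the highly symmetric matrix $\frac{1}{n^2}J_m\otimes J_n+\frac{1}{n}I_m\otimes\bigl(I_n-\frac{1}{n}J_n\bigr)$; since this is symmetric and exactly block-diagonalises with respect to $(\mathbb{R}^m\otimes\mathbf{1}_n)\oplus(\mathbb{R}^m\otimes\mathbf{1}_n^{\perp})$, the rank $1+m(n-1)$, the full spectrum, and the kernel $\{w\otimes\mathbf{1}_n:\,w\perp\mathbf{1}_m\}$ (which saturates the no-signaling bound) can all be read off with no further argument. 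Your route buys a completely explicit, self-verifying computation --- there is no kernel-dimension inference left to justify --- at the price of using exponentially many extreme points in the mixture; the paper's sparser mixture is also what gets recycled as the full-rank perturbation direction in the proof of Theorem \ref{obsnegl2}, though your $L$ would serve equally well there. Your closing remark that one could instead argue via full-dimensionality of $\l$ and density of maximal-rank matrices is also sound (it needs $\dim\l=\dim\ns$, which the paper invokes elsewhere), but the explicit construction is preferable and is the one you correctly carry out.
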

\begin{proof}
Consider the set of $n-1$ vectors of size $mn$ given by (we use here the same notation for the $\{e_i^{(n)}\}$ as in Sec.\ \ref{pms})
\begin{equation}
v_i^{(1)}=\left(
      \begin{array}{c}
        e_i^{(n)} \\
        e_1^{(n)} \\
        \vdots \\
        e_1^{(n)} \\
      \end{array}
    \right),\quad i=2,3,\ldots n.
\end{equation}
We will also consider similar sets of the same cardinality
\begin{equation}
\{v_i^{(2)}\}=\left\{\left(
      \begin{array}{c}
        e_1^{(n)} \\
        e_i^{(n)} \\
        e_1^{(n)} \\
        \vdots \\
        e_1^{(n)} \\
      \end{array}
    \right)\right\},\ldots, \{v_i^{(m)}\}=\left\{\left(
      \begin{array}{c}
        e_1^{(n)} \\
        \vdots \\
        e_1^{(n)} \\
        e_i^{(n)} \\
      \end{array}
    \right)\right\}.
\end{equation}
Notice now that the set $\{v_i^{(j)}\}$ ($i=2,\ldots n$, $j=1,\ldots m$) contains $mn-m$ LI vectors. This can be seen by noticing the each vector has a nonzero-entry where all the others are zero. Notice moreover that if we add the vector
\begin{equation}
v^{(0)}=\left(
      \begin{array}{c}
        e_1^{(n)} \\
        e_1^{(n)} \\
        \vdots \\
        e_1^{(n)} \\
      \end{array}
    \right)
\end{equation}
to this set, the vectors are still LI (cf.\ the reasoning after Eq.\ (\ref{conspms})). Finally, notice that the matrices
\begin{equation}
L_0=v^{(0)}(v^{(0)})^T,\quad \{L_{ij}\}=\{v_i^{(j)}(v_i^{(j)})^T\}
\end{equation}
clearly correspond to LDBs in the $(m,n)$ scenario. Hence, the matrix
\begin{equation}\label{lesp}
L=\frac{1}{mn-m+1}\left(L_0+\sum_{ij} L_{ij}\right)
\end{equation}
corresponds to a local behaviour and has the desired property that $\rank L=mn-m+1$. This is because by construction $Lv^{(0)}\neq0$ and $Lv_i^{(j)}\neq0$ $\forall i,j$ and, thus, $\dim\ker L\leq m-1$, which leads to the claim using that $\rank L+\dim\ker L=mn$. To see that indeed none of the above vectors is in the kernel of $L$, notice that $L_0v^{(0)}=mv^{(0)}$ while $L_{ij}v^{(0)}$ is a nonnegative vector $\forall i,j$ and similarly for the $\{v_i^{(j)}\}$.
\end{proof}
This shows that $\L\notin\q_d$ for $d<\sqrt{mn-m+1}$ and as discussed above we obtain the following corollary.
\begin{corollary}\label{cor}
Every set $\q_d$ in a $(m,n)$ scenario such that $d<\sqrt{mn-m+1}$ is not convex.
\end{corollary}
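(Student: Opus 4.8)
The plan is to exhibit a single behaviour that separates $\q_d$ from its convex hull, thereby certifying non-convexity directly from the definition. The natural candidate is the local behaviour $\L$ produced by Lemma~\ref{obsbell}, whose associated matrix $L$ attains the maximal rank $mn-m+1$. The argument then splits into two halves: showing that $\L$ lies outside $\q_d$ for the stated range of $d$, and showing that $\L$ nevertheless lies in $\co(\q_d)$.

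First I would handle the exclusion. By Observation~\ref{obsest2}, any $\P\in\q_d$ must satisfy $d\geq\sqrt{\rank P}$. Applying this to $\L$, whose matrix has $\rank L=mn-m+1$ by Lemma~\ref{obsbell}, forces $d\geq\sqrt{mn-m+1}$ whenever $\L\in\q_d$. Contrapositively, if $d<\sqrt{mn-m+1}$ then $\L\notin\q_d$.

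Second, I would show $\L\in\co(\q_d)$. Since $\L$ is local, it is by definition a convex combination of LDBs, and each LDB $D(ab|xy)=\delta_{a,f(x)}\delta_{b,g(y)}$ is realizable by a one-dimensional quantum system: take the trivial state on $\mathbb{C}^1$ together with the scalar effects $E_a^x=\delta_{a,f(x)}$ and $F_b^y=\delta_{b,g(y)}$, which form valid POVMs and reproduce $D$ through $\tr(\rho\,E_a^x\otimes F_b^y)$. Hence every LDB lies in $\q_1$, so $\L\in\co(\q_1)$. Because $\q_1\subseteq\q_d$ for all $d\geq1$ (a one-dimensional system embeds trivially into a larger one), we get $\co(\q_1)\subseteq\co(\q_d)$ and therefore $\L\in\co(\q_d)$.

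Combining the two halves, for any $d<\sqrt{mn-m+1}$ the behaviour $\L$ belongs to $\co(\q_d)$ but not to $\q_d$, so $\q_d\neq\co(\q_d)$, which is exactly the statement that $\q_d$ fails to be convex. There is essentially no obstacle left once Lemma~\ref{obsbell} is in hand: the entire difficulty is concentrated in the construction of a local behaviour of maximal rank, which the lemma already supplies. The only point meriting a line of care is the verification that LDBs genuinely sit inside $\q_1$ rather than merely inside the local set $\l$; this is what guarantees the inclusion $\L\in\co(\q_d)$ and hence that the separating behaviour is legitimately a point of the convex hull we are comparing against.
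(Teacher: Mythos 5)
Your proposal is correct and follows essentially the same route as the paper: the behaviour $\L$ of Lemma~\ref{obsbell} is excluded from $\q_d$ via Observation~\ref{obsest2}, while its locality places it in $\co(\q_1)\subseteq\co(\q_d)$, yielding non-convexity. Your extra remark that LDBs genuinely lie in $\q_1$ is a worthwhile clarification but does not change the argument.
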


Notice that in the case $n=2$ we recover the result of \cite{vertesi} that allowed to verify the non-convexity of $\q_2$ in the scenario $(4,2)$. With our result, the simplest scenario for which we can see that $\q_2$ is not convex is $(2,3)$. However, as mentioned before, $\q_2$ is known to be non-convex in the simplest possible scenario $(2,2)$. Since the maximal rank of the matrix of a behaviour is $mn-m+1$, Corollary \ref{cor} cannot be further improved using our techniques. Thus, more powerful constraints could be in principle established going beyond the estimates based on the rank.

Lemma \ref{obsbell} has also a similar interpretation to the result of Sec.\ \ref{srmoreq} in the prepare-and-measure scenario. If A and B are bound to local preparations but have access to shared randomness, they can obtain behaviours which in order to be accessible quantumly without this resource need an arbitrarily large dimension as the number of inputs and/or outputs grows. An analogous result to that of Sec.\ \ref{qmoresr} is also obviously true by Bell's theorem. There are behaviours observable quantumly without shared randomness (even with the smallest possible dimension $d=2$) which cannot be attained by local strategies no matter how much access to this resource they have.

\subsection{Negligibility of low dimensional sets}

As in the prepare-and-measure scenario, one can show as well that a significant difference between the availability or not of shared randomness is that in the latter case the sets of low-dimension behaviours are negligible in the set of all quantum behaviours.

\begin{theorem}\label{obsnegl2}
Every set $\q_d$ in a $(m,n)$ scenario such that $d<\sqrt{mn-m+1}$ is of measure zero and nowhere dense in the full set of quantum behaviours $\q$.
\end{theorem}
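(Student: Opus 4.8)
The plan is to mimic the proof of Theorem~\ref{obsnegl}: reduce the statement to a negligibility claim about rank-deficient behaviours, and then establish the two properties for that set. Concretely, I would set $S=\{\P\in\q:\rank P<mn-m+1\}$, the behaviours whose associated matrix fails to attain the maximal possible rank. By Observation~\ref{obsest2}, $\P\in\q_d$ forces $\rank P\leq d^2$, so when $d<\sqrt{mn-m+1}$ one has $\rank P\leq d^2<mn-m+1$, and since the rank is an integer this gives $\rank P\leq mn-m$, i.e. $\q_d\subseteq S$. As both measure zero and nowhere density are inherited by subsets, it suffices to prove them for $S$. The genuinely new feature compared with Theorem~\ref{obsnegl} is that the ambient set is now the quantum set $\q$, which is convex but is neither a polytope nor the full set of behaviours; hence ``measure'' and ``interior'' must be understood relative to the affine hull $\mathrm{aff}(\q)$ (in which $\q$ is full-dimensional), and any approximating behaviour must itself remain quantum. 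Both points are handled by the convexity of $\q=\q_\infty$ together with Lemma~\ref{obsbell}, which supplies a behaviour $\L\in\l\subseteq\q$ whose matrix $L$ has maximal rank $mn-m+1$.

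For the measure-zero part I would argue exactly as before. The condition $\rank P<mn-m+1$ is the simultaneous vanishing of all $(mn-m+1)\times(mn-m+1)$ minors of $P$, each a polynomial in the entries $\{P(ab|xy)\}$ and hence, after parametrizing $\mathrm{aff}(\q)$ by affine coordinates, a polynomial on $\mathrm{aff}(\q)$. Because $\L\in\q\subseteq\mathrm{aff}(\q)$ has maximal rank, at least one such minor, call it $\mu$, does not vanish at $L$ and is therefore not the zero polynomial on $\mathrm{aff}(\q)$. Since $S$ is contained in the zero set of $\mu$, and the zero set of a nonzero polynomial has Lebesgue measure zero in $\mathrm{aff}(\q)$, the set $S$ has measure zero in $\q$.

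For nowhere density I would first note that $S$ is closed in $\q$, being the intersection of $\q$ with the (closed) common zero set of the minors, so it remains only to show $S$ has empty interior relative to $\q$. Fix any $\P\in S$ and consider the behaviours $\P_t$ with matrix $P_t=(1-t)P+tL$ for $t\in[0,1]$; by convexity of $\q$ each $\P_t\in\q$. The map $t\mapsto\mu(P_t)$ is a polynomial in $t$ that is nonzero at $t=1$ (as $\mu(L)\neq0$), hence it has finitely many roots in $[0,1]$, and for all sufficiently small $t>0$ one has $\mu(P_t)\neq0$, so $\rank P_t=mn-m+1$ and $\P_t\notin S$. Since $P_t\to P$ as $t\to0^+$, every neighbourhood of $\P$ in $\q$ meets $\q\setminus S$, so $S$ has empty relative interior; being closed, it is nowhere dense in $\q$. (As noted in Theorem~\ref{obsnegl}, closedness together with measure zero already suffices, but I prefer to exhibit the approximation explicitly.)

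The step I expect to be the real obstacle is precisely what separates this from the prepare-and-measure case: ensuring that the full-rank approximants of a rank-deficient behaviour can be chosen \emph{inside} $\q$. In Theorem~\ref{obsnegl} the ambient set was the whole behaviour polytope, so one could freely perturb towards any convenient full-rank behaviour, whereas here the perturbation must stay quantum. The resolution is to perturb along the segment towards the explicit local (hence quantum) maximal-rank behaviour $\L$ of Lemma~\ref{obsbell} and to invoke the convexity of $\q$, which is exactly what keeps every intermediate point in $\q$ while the polynomial-in-$t$ argument forces maximal rank off a finite set of parameters.
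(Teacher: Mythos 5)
Your proposal is correct, and it shares all the essential ingredients with the paper's proof: the reduction of $\q_d$ to the set of rank-deficient behaviours via Observation~\ref{obsest2}, the use of vanishing minors as polynomial conditions for the measure-zero part, and the perturbation towards the maximal-rank local behaviour $\L$ of Lemma~\ref{obsbell} for the density part. Where you genuinely diverge is in the handling of the ambient set, which you correctly identify as the crux. The paper first proves that the rank-deficient set is negligible in the no-signaling polytope $\ns$ (reusing the kernel argument of Theorem~\ref{obsnegl} to show $\rank P_\epsilon=mn-m+1$ for \emph{all} $\epsilon>0$), and then transfers negligibility from $\ns$ to $\q$ in a second step: measure via $\dim\l=\dim\q=\dim\ns=t$, and nowhere density via a contradiction argument invoking Corollary 6.4.1 of Rockafellar to upgrade a relative neighbourhood in $\q$ to a genuine neighbourhood in $\ns$. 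You instead work directly relative to $\mathrm{aff}(\q)$ throughout: convexity of $\q$ keeps the segment $(1-t)P+tL$ inside $\q$, a single non-vanishing minor $\mu$ with $\mu(L)\neq0$ witnesses both that the polynomial is nonzero on $\mathrm{aff}(\q)$ (measure zero) and that $t\mapsto\mu(P_t)$ has finitely many roots (density of the complement), and closedness of $S$ in $\q$ finishes the nowhere-density claim. Your route avoids the $\ns$-to-$\q$ transfer and the appeal to Rockafellar entirely, at the mild cost of concluding full rank only for all but finitely many $t$ rather than for every $t>0$, which is all that is needed. Both arguments are sound; yours is arguably the more economical, while the paper's two-step version isolates a statement about $\ns$ that is of independent interest.
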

\begin{proof}
The proof is given in two parts. We first show that with the given premise $\q_d$ is of measure zero and nowhere dense in $\ns$. We then show that this implies the claim.

The first part follows closely the proof of Theorem \ref{obsnegl} and we will only outline it. Similarly, we consider the set $S$ of rank-deficient no-signaling behaviours (i.e.\ $\rank P<mn-m+1$) and prove the claim for this set, which is extended to $\q_d$ with $d<\sqrt{mn-m+1}$ because $\q_d\subset S$. The no-signaling set $\ns$ is a polytope in $\mathbb{R}^t$ with \cite{pironio}
\begin{equation}\label{t}
t=m^2(n-1)^2+2m(n-1).
\end{equation}
For behaviours in $S$ some polynomials of the $t$ variables must additionally vanish and $S$ has measure zero in $\ns$. To see that $S$ is nowhere dense in $\ns$, one should follow the same argumentation as before replacing $P_\epsilon$ in Eq.\ (\ref{pepsilon}) by
\begin{equation}
P_\epsilon=(1-\epsilon)P+\epsilon L,
\end{equation}
where $L$ is given by Eq.\ (\ref{lesp}).

We finish by showing that $\q_d$ ($d<\sqrt{mn-m+1}$) having zero measure and being nowhere dense in $\ns$ implies the same negligibility properties inside $\q$. Since the local polytope $\l$ has the same dimension ($t$) as $\ns$ \cite{pironio} and $\l\subset\q\subset\ns$, we have that $\q$ is not of measure zero in $\ns$ \footnote{The fact that $\q$ is a bounded convex set guarantees that it is measurable.}. Hence $\q_d$ has measure zero in $\q$ as well. Regarding nowhere density, we use again that $\q$ is full dimensional together with the fact that it is a convex set. Corollary 6.4.1 in \cite{rockafellar} tells us then that for every $P$ in the interior of $\q$, $\mathop{\q}\limits^\circ$, we have that
\begin{equation}\label{ast}
\exists\epsilon>0\textrm{ such that }||P-P'||<\epsilon\Longrightarrow P'\in\q.
\end{equation}
Let us proceed by contradiction and assume that $\q_d$ is not nowhere dense in $\q$. Then, there would exist a $P\in\overline{\q}_d$ and $\delta>0$ such that $||P-P'||<\delta$ and $P'\in\q$ implies that $P'\in\overline{\q}_d$. By definition, such $P$ must belong to $\mathop{\overline{\q}_d}\limits^\circ$ and since $\q_d\subset\q$, it holds then that $\mathop{\overline{\q}_d}\limits^\circ\subset\mathop{\overline{\q}}\limits^\circ=\mathop{\q}\limits^\circ$, where the equality follows from the convexity of $\q$ (cf.\ Theorem 6.3 in \cite{rockafellar}). Thus, by condition (\ref{ast}) we can drop the assumption $P'\in\q$ if we take $\min\{\epsilon,\delta\}$, i.e.\
\begin{equation}
||P-P'||<\min\{\epsilon,\delta\}\Longrightarrow P'\in\overline{\q}_d.
\end{equation}
This means that $\q_d$ is not nowhere dense in $\ns$ and we have reached a contradiction \footnote{An alternative way to prove that $\q_{d}$ is nowhere dense in $\q$ would be to show that $\q_d$ is closed.}.
\end{proof}

Theorem \ref{obsnegl2} tells us that in such simple scenarios as $(4,2)$ or $(2,3)$ it is not only not enough to consider quantum systems with $d=2$ to reproduce all quantum behaviours but that this is almost never the case.

This negligibility property is again in sharp contrast to the case in which the devices of the parties can be correlated. When shared randomness is granted to the parties we have that $\l\subset\co(\q_d)$ $\forall d$. Since, as we have already used in the proof of Theorem \ref{obsnegl2}, $\dim \l=\dim \q=\dim \ns$, the sets $\co(\q_d)$ have non-zero measure and are not nowhere dense in $\q$ $\forall d$.

Looking at Theorem \ref{obsnegl2}, it comes as natural question whether the bound $d<\sqrt{mn-m+1}$ is optimal for the negligibility property to hold. Although we cannot answer completely this question, the above observation allows to establish a lower bound on $d$ for which negligibility does not hold anymore in every scenario \footnote{Notice that a similar reasoning can be applied to the prepare-and-measure scenario. However, it yields a trivial bound since in this case $d=|\x|$ is enough to generate the set of all behaviours.}.

\begin{proposition}
Every set $\q_d$ in a $(m,n)$ scenario such that $d\geq m^2(n-1)^2+2m(n-1)+1$ has non-zero measure and is not nowhere dense in the full set of quantum behaviours $\q$.
\end{proposition}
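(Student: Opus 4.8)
The plan is to show that if the local dimension $d$ is large enough, then $\q_d$ already contains the full-dimensional local polytope $\l$, which immediately gives both non-negligibility properties. The key observation is that the dimension of the ambient affine space in which $\ns$ (and hence $\q$) lives is exactly $t=m^2(n-1)^2+2m(n-1)$ by Eq.\ (\ref{t}), so the threshold $d\geq t+1$ in the statement is suggestive: it should be precisely the dimension that suffices to realize \emph{every} local deterministic behaviour, and enough of their mixtures, with a single fixed local Hilbert space.

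First I would recall that each LDB $D(ab|xy)=\delta_{a,f(x)}\delta_{b,g(y)}$ is trivially a product behaviour and can be realized quantumly with $d=1$ (a classical deterministic strategy is a degenerate quantum strategy). The difficulty is that $\q_d$ is \emph{not} convex, so membership of each LDB in $\q_1$ does not give membership of their mixtures in any $\q_d$. The natural route is therefore to embed shared randomness \emph{inside} a single quantum system of bounded dimension: given any $\L=\sum_\lambda p_\lambda D_\lambda\in\l$, one writes the convex weights $\{p_\lambda\}$ into the state. Concretely, I would have Alice and Bob share a state of the form $\rho_{AB}=\sum_\lambda p_\lambda \proj{\lambda}_A\otimes\proj{\lambda}_B\otimes\rho^\lambda_{AB}$ on a Hilbert space whose classical register $\ket{\lambda}$ selects which deterministic strategy to apply; the measurements $\{E_a^x\},\{F_b^y\}$ read the register and then act deterministically according to $f_\lambda,g_\lambda$. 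This realizes $\L$ with a local dimension equal to the number of distinct deterministic strategies mixed, so the only task is to bound how many LDBs are needed.

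The main obstacle, then, is a \textbf{Carath\'eodory-type counting argument}: since $\l$ is a polytope sitting inside an affine space of dimension $t$, every point of $\l$ is a convex combination of at most $t+1$ of its vertices (the LDBs). Hence every $\L\in\l$ can be written as a mixture of at most $t+1$ deterministic strategies, and the register $\ket{\lambda}$ needs only $t+1$ orthogonal values, giving $d\leq t+1=m^2(n-1)^2+2m(n-1)+1$. Therefore for any $d$ at least this value we have $\l\subseteq\q_d$. I would need to verify that the register-based construction genuinely keeps the \emph{local} support dimension at $t+1$ (each party holds one copy of the $(t+1)$-dimensional register, and the conditional deterministic measurements act within it), which is the one place requiring care but is routine.

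Finally I would conclude exactly as in the remark following Theorem \ref{obsnegl2}. We have $\l\subseteq\q_d\subseteq\q\subseteq\ns$, and since $\dim\l=\dim\q=\dim\ns=t$ by \cite{pironio}, the set $\l$ is full-dimensional in $\ns$. A full-dimensional convex subset of $\ns$ has non-zero Lebesgue measure, so $\q_d$ has non-zero measure in $\q$. For nowhere density, $\l$ being full-dimensional has non-empty interior, and this interior is contained in the interior of $\q_d$, so $\q_d$ is not contained in any nowhere-dense set; in particular $\q_d$ is not nowhere dense in $\q$. This completes the argument.
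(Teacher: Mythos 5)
Your proposal is correct and follows essentially the same route as the paper: realize the shared randomness inside the quantum state via a block-diagonal (direct-sum) construction so that a mixture of behaviours of local dimensions $d_i$ lives in $\q_{\sum_i d_i}$, invoke Carath\'eodory's theorem to write any point of $\l=\co(\q_1)$ as a mixture of at most $t+1$ local deterministic behaviours, and conclude from $\l\subseteq\q_{t+1}$ together with the full-dimensionality of $\l$. The only cosmetic difference is that the paper states the additivity lemma for general pairs $\q_d,\q_{d'}$ rather than specializing to a classical register over deterministic strategies.
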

\begin{proof}
We first notice that if $\P\in\q_d$ and $\P'\in\q_{d'}$, then $\P_\lambda=\lambda\P+(1-\lambda)\P'\in\q_{d+d'}$ $\forall\lambda\in(0,1)$. This is a well-known argument that is used to show that $\q$ is convex. Indeed, if
\begin{align}
P(ab|xy)&=\tr(\rho E_a^x\otimes F_b^y),\nonumber\\
P'(ab|xy)&=\tr(\rho' (E')_a^x\otimes (F')_b^y),
\end{align}
then
\begin{equation}
P_\lambda(ab|xy)=\tr[(\lambda\rho\oplus(1-\lambda)\rho') \mathcal{E}_a^x\otimes \mathcal{F}_b^y]
\end{equation}
with $\mathcal{E}_a^x=E_a^x\oplus(E')_a^x$ $\forall a,x$ and $\mathcal{F}_b^y=F_b^y\oplus(F')_b^y$ $\forall b,y$.

On the other hand, Carath\'eodory's theorem \cite{rockafellar} tells us that $\forall \P\in\l=\co(\q_1)$ we have the convex combination
\begin{equation}
P=\sum_{i=1}^{t+1}\lambda_iP_i,\quad P_i\in\q_1\,\forall i,
\end{equation}
where we are using that $\dim\l=t$ (cf.\ Eq.\ (\ref{t})). This means that $\co(\q_1)\subset\q_{t+1}$ and since $\co(\q_1)$ is not of measure zero nor nowhere dense, we obtain the claim.
\end{proof}

As in the prepare-and-measure scenario, Theorem \ref{obsnegl2} also implies an exceptional robustness in the presence of noise for DIDW when shared randomness is not available. Notice that in this case it is natural to assume that the noisy behaviour $P_\eta$ (cf.\ Eq.\ (\ref{noisy})) is subjected to uncorrelated noise, i.e.\ $P_n(ab|xy)=P_A(a|x)P_B(b|y)$ $\forall a,b,x,y$, since this affects independently the devices held by A and B. Therefore, the noise induces again a rank-one perturbation and we have that $\rank P -1\leq\rank P_\eta\leq\rank P+1$ $\forall\eta>0$.

\section{Conclusions}

The task of DIDW has been receiving a lot of attention in recent years. It enables experiments to certify the underlying dimension of an uncharacterized physical system and it provides a framework for semi-device-independent quantum information processing. The most common scenarios for DIDW involve different parties interacting with the physical system: the so-called prepare-and-measure and Bell scenarios. Depending on the context it might or might not be the case that the parties are provided with an extra resource, shared randomness, that allows to correlate the different devices the parties hold. In this work we have explored the differences that may arise for the task of DIDW in these two possible settings. We have seen that shared randomness is indeed a powerful resource: certain behaviours which can be obtained by sending just one classical bit (when the devices are correlated) need quantum systems of arbitrarily large dimension in the absence of shared randomness (the necessary quantum dimension grows with the number of possible inputs while the classical dimension remains 2). On the other hand, quantumness is also more powerful than classical systems even if the latter have access to shared randomness. There are behaviours that require exponentially larger classical dimension even though in the quantum setting the devices are not correlated. We have also shown that one of the main differences given by the availability or not of this resource is not only the lack of convexity of the corresponding sets of probability distributions but the fact that for sufficiently small dimensions these sets are negligibly small (of measure zero and nowhere dense in the set of all possible distributions) if shared randomness is not granted. These results are obtained using very simple estimates for the dimension based on the rank of a matrix. For the future, it would be interesting to study whether the bounds on the dimension as a function of the number of possible inputs provided here for the sets to be non-convex or negligible in both the prepare-and-measure and Bell scenarios can be improved by using more sophisticated tools. The results of \cite{sikora1,sikora2} and the notions of non-negative and positive semidefinite rank might be helpful in this task.

\begin{acknowledgments}
I thank A. Monr\`{a}s, C. Palazuelos and I. Villanueva for very useful discussions. This research was funded by the Spanish MINECO through grants MTM2014-54692-P and MTM2014-54240-P and by the Comunidad de Madrid through grant QUITEMAD+CM S2013/ICE-2801.
\end{acknowledgments}

\begin{appendix}

\section{Relation between the ranks of $W$ and $P$}

Taking the matrices defined in Sec.\ \ref{estimates}, here we prove that $\rank W\leq\rank P$. In order to transform $P$ into $W$ we have to subtract to every odd row $i$ its subsequent row $i+1$. This a so-called elementary row operation and can be achieved by multiplying $P$ from the left with the matrix $E_i$, which is like the identity with the difference that the entry $(i,i+1)$ should be equal to $-1$. Since the matrices $\{E_i\}$ are all full-rank, the matrix $\prod_{i\textrm{ odd}}E_iP$ has the same rank as $P$. To obtain $W$ it just remains to delete all even rows and columns, a process which certainly cannot increase the rank. Thus, we obtain the desired result.

\end{appendix}

\end{document}